\newenvironment{proof}{\begin{quotation} {\bf Proof}}
{\noindent\rule{0.5em}{2.0ex}\end{quotation}}
\let\@copyrightspace\relax
\def\Loc{\texttt{Loc}}
\def\Val{\texttt{Val}}
\def\withmath#1{\relax\ifmmode#1\else{$#1$}\fi}
\newcommand{\derives}{\longrightarrow}
\newcommand{\starderives}{\stackrel{\star}{\longrightarrow}}
\def\tuple#1{\withmath{\langle #1 \rangle}}
\def\alt{\withmath{\;\char`\|\;}}
\def\?{\withmath{\;\char`\?\;}}
\newcommand\LL[1]{\withmath{\llb#1\rrb}}
\newtheorem{definition}{Definition}[section]
\newtheorem{proposition}{Proposition}[section]
\newcommand{\defeq}{\stackrel{def}{=}}
\newcommand\Skip{\texttt{skip}}
\def\llb{\withmath{\lbrack\!\lbrack}}
\def\rrb{\withmath{\rbrack\!\rbrack}}
\newcommand\isasync[1]{\texttt{isasync}\ #1}
\newcommand\issync[1]{\texttt{issync}\ #1}
\newcommand\async[1]{\texttt{async}\ #1}
\newcommand\stuck[1]{\texttt{stuck}\ #1}
\newcommand\finish[1]{\texttt{finish}\ #1}
\newcommand\Advance[0]{\texttt{advance}}
\newcommand\FOR[4]{\texttt{for}(#1\ \texttt{in}\ #2..#3)\ #4}
\def\O#1{{\mathcal O}\LL{#1}}
\newcommand\restr[2]{{
  \left.\kern-\nulldelimiterspace 
  #1 
  \vphantom{\big|} 
  \right|_{#2} 
  }}
\def\from#1\infer#2{{{\textstyle #1}\over{\textstyle #2}}}
\def\rname#1\from#2\infer#3{{{\textstyle #2}\over{\textstyle #3}}{\ \mbox{#1}}}
\def\rnames#1\side#2\from#3\infer#4{{{\textstyle #3}\over{\textstyle #4}}{\ \textstyle(#2)}{\ \textstyle(#1)}}
\def\axname#1\axiom#2{{\textstyle #2}{\ \mbox{#1}}}
\def\code#1{\texttt{#1}}
\def\hbk{\prec\!\!\prec}
\title{Checking Race Freedom of Clocked X10 Programs}
\begin{document}

\maketitle

\begin{abstract}

One of many approaches to better take advantage of parallelism, which has now
become mainstream, is the introduction of parallel programming languages.
However, parallelism is by nature non-deterministic, and not all parallel bugs
can be avoided by language design. This paper proposes a method for
guaranteeing absence of data races in the polyhedral subset of clocked X10
programs.

Clocks in X10 are similar to barriers, but are more dynamic; the subset of
processes that participate in the synchronization can dynamically change at
runtime. We construct the happens-before relation for clocked X10 programs,
and show that the problem of race detection is undecidable. However, in many
practical cases, modern tools are able to find solutions or disprove their
existence. We present a set of benchmarks for which the analysis is possible
and has an acceptable running time.
\end{abstract}
\vspace{-0.25cm}

\section{Introduction}

Driven by the limitations of current micro-architecture technologies,
parallel computing has gone mainstream. Parallel programs are now required
to utilize the massive amount of parallelism provided by multi-core and many-core architectures.
Efficiently parallelizing programs by hand requires significant effort, and the programmers must ``think parallel''.
However, automatic parallelization is extremely difficult, and has seen limited success so far.

As an alternative approach, many parallel programming languages are being
developed~\cite{X10, yelick1998titanium, chamberlain2007parallel,
cave2011habanero}.  These languages aim for both high \emph{performance} and
high \emph{productivity} by employing new programming models.  Parallel
programming is inherently more difficult than sequential programming,
especially when it comes to debugging.  This is due to the non-deterministic
nature of parallelism that brings a new set of bugs that are not consistently
reproducible. 

There are a number of parallel debuggers (e.g.,~\cite{krammer2004marmot,
park2011efficient, vetter2000umpire}) to help detecting parallel bugs, but
their use is time consuming, and give no guarantees.  We seek to complement
dynamic debuggers by statically analyzing parallel programs and providing
guarantees.

Specifically, we target the X10 parallel programming language~\cite{X10}.
X10 provides (logical) deadlock-free guarantee for programs written following a
small set of rules. However, data race is another class of common parallel bugs
that remains to be detected. Static analyses of X10 programs for data race
detection have been proposed~\cite{agarwal2007may, vechev2010automatic,
yuki2013array}, but are currently limited to programs without \emph{clocks}.

Clock is a synchronization mechanism in X10 that is similar to barriers.
However, it is more \emph{dynamic}, meaning that the processes participating in
a synchronization can dynamically change. For example, consider the following:

\begin{tabular}{c|c}
\begin{lstlisting}
//Process 1
   sync;
   S0;
   sync;
\end{lstlisting}
&
\begin{lstlisting}
//Process 2
   sync;
   S1;
	
\end{lstlisting}
\end{tabular}

The program above depicts an abstracted view of two parallel processes
synchronizing through calls to {\tt sync} statements.  If the {\tt sync}
statement corresponds to MPI-style barrier, there is a deadlock, because the
two processes do not call the barrier the same number of times.  In X10, Process~2
will be removed from the set of processes participating in the synchronization
when it reaches its termination, and therefore the above does not cause a 
deadlock.  This dynamic behavior of X10 clocks invalidates static analyses
developed for MPI-style barriers~\cite{aiken1998barrier, kamil2005concurrency,
zhang2007barrier}. 

In this paper, we present static race detection for X10 programs that can
provide race-free guarantees for regions of programs amenable to our analysis.
Race-free guarantee of program sub-regions, and the ability to detect parallel
bugs at compile-time both contribute to reduce the debugging effort, and hence
leads to increased productivity. The main limitation of our approach is that
it is \emph{polyhedral}; loop bounds and array accesses must be affine. This
class of programs can be represented using the \emph{polyhedral model}, a
mathematical framework for reasoning about program
transformations~\cite{Feau:2011}. Although the applicability of the model is
limited, it has been proven to be effective in automatic parallelization.

Specifically, we extend the work by Yuki et al~\cite{yuki2013array} by
extending the subset to all of its core parallel constructs; {\tt async}, {\tt
finish}, and \emph{clocks}. We retain the main strength of their work that the
analysis is statement \emph{instance-wise}, and array \emph{element-wise}.
Instance-wise analysis provides information at the granularity of statement
instances: the execution of a statement for a specific value of 
the loop iterators. Element-wise analysis distinguishes array accesses at the granularity
of array elements, so that accesses are flagged as in conflict only when the
same element is accessed.

Our key contributions are:
\begin{itemize}
 \item Extension of the operational semantics to include clocks.
       The ``happens-before'' relation due to clocks is defined. Intuitively, 
       the happens-before relation requires to count how many times a process has
       synchronized in its execution trace.
 \item Formulation of the happens-before relation with clocks in a 
       polyhedral context. The number of times a process has synchronized 
       at a given statement instance can be formulated as counting integer 
       points in a union of polyhedra.
 \item Proof of undecidability of race detection for clocked X10 programs. In
       the general case, the synchronization counts produce \emph{polynomials}. 
       Thus, comparing two statement instances with the extended 
       happens-before relation turns out to be undecidable.
 \item Race-free guarantee of clocked X10 programs by disproving all possible 
       races. Since the happens-before relation involves polynomials in 
       the general case, Integer Linear Programming commonly used in 
       the polyhedral context cannot be used. Instead, we formulate the 
       race condition as a constraint statisfaction problem and resort to advanced 
       SMT solvers to \emph{disprove} the existence of potential races.
 \item Prototype implementation of the above.
\end{itemize}  

The rest of the paper is organized as follows. We introduce the subset of X10
we use in this paper in Section~\ref{sec:new-clocks}.  
We review the work of Yuki
et al.~\cite{yuki2013array} that we extend upon in
Section~\ref{sec:background}.  We develop the operational semantics and define
the happens-before relation of clocked X10 programs in
Section~\ref{sec:clocks}.  We establish the connection between the semantics
and polyhedral analysis and present the data race detection formulation in
Section~\ref{sec:races}.  We demonstrate our approach with examples in
Section~\ref{sec:examples} and present our implementation in
Section~\ref{sec:implementation}.  Related work is discussed in
Section~\ref{sec:related} and we conclude in Section~\ref{sec:conclusion}.

\section{The X10 Subset}\label{sec:new-clocks}

The subset of X10~\cite{X10} considered in a related work~\cite{yuki2013array}
that deals with X10 programs without clocks is the following:
\begin{itemize} 
\item Sequence ($\{S\,T\}$): Composes two statements in sequence.
\item Sequential \texttt{for} loop: All loops have an associated
  loop iterator. X10 loops may scan a multidimensional iteration
  space, but such loops are expanded.
\item \texttt{async}: The body of the \texttt{async} is executed by an
independent lightweight thread, called \texttt{activity} in X10.
\item \texttt{finish}: The activity executing a {\tt finish} waits for all
  activities spawned within the body of \texttt{finish} to terminate before
  proceeding further.
\end{itemize}
The details of the leaf statements are not relevant; the only required 
information is the array (elements) that are read and written.

Additionally, the program must fit the polyhedral model.  The polyhedral model
requires loop bounds, and array access to be affine expressions of the
surrounding loop indices and program parameters (run-time constants).

\subsection{X10 Clocks}
Clocks are generalization of barriers for dynamically varying sets of
activities (X10 threads).  Activities can be \emph{registered}
to clocks, and may synchronize by calls to {\tt advance} statements that blocks
until all activities registered to the clock executes an {\tt advance}.  X10
allows for the set of registered activities to dynamically change.

One subtle but important difference with global barriers is that if all but one
activity had reached an {\tt advance}, and the last activity reaches its
termination, the activity is \emph{de-registered} from its clocks, allowing the
remaining activities to proceed.

\subsection{Explicit and Implicit Clocks}
X10 has two different syntaxes for clocks; explicit and implicit clocks.
In the explicit syntax, clocks are represented as objects in the program.
The programmer has fine grain control over when to synchronize on a 
particular clock, and can even de-register an activity from some clocks.

However, the use of the explicit syntax is not necessary in the majority 
of the cases. 
For instance, we were able to rewrite a significant set of benchmarks
(see Section~\ref{sec:benchmarks}) using only implicit clocks.
Explicit syntax can lead to deadlocks, although it can easily be avoided 
by following simple rules~\cite{X10}, and needs a difficult points-to analysis
for its verification.
Thus, programmers are strongly encouraged to use the alternative syntax:
 implicit clocks.

\subsection{Implicit Clock Syntax}
In implicit clock syntax both {\tt finish} and {\tt async} may be annoted 
with an optional keyword {\tt clocked}.
The {\tt clocked} variants must satisfy the following:
\begin{itemize}\itemsep0pt
\item {\tt clocked async} must be enclosed by {\tt clocked finish} or 
{\tt clocked async}
\item {\tt clocked async} must not be enclosed by 
\emph{unclocked} {\tt async} or \emph{unclocked} {\tt finish}
\item {\tt advance} must be enclosed by {\tt clocked finish}
\item {\tt advance} must not be enclosed by \emph{unclocked} {\tt async}
\end{itemize}
and has the following semantics:
\begin{itemize}\itemsep0pt
\item {\tt clocked finish} creates a clock and registers itself to it,
\item {\tt clocked async} registers itself to the clock created by its 
governing {\tt clocked finish},
\item {\tt clocked finish} de-registers itself from the clock it created 
when reaching the end of its enclosed statement,
\item {\tt clocked async} de-registers itself from its clock when reaching the 
end of its enclosed statement, and
\item {\tt advance} synchronizes with the clock created by its 
governing {\tt clocked finish}.
\end{itemize}
where the governing {\tt clocked finish} is the first {\tt clocked finish} found
when traversing the AST towards the root from a node.

\subsection{Advance Counts}
Informally, the clock may be considered to have as many associated counters as
there are registered activities.  When all participating activities execute an
{\tt advance}, synchronization takes place, and all counters are incremented.
The statements that may-happen-in-parallel are restricted to those that are
executed when the value of the counters match. This notion is formalized in
Section~\ref{sec:happens-before}.

\subsection{Example}
Figure~\ref{fig:clock-ex} illustrates simple uses of X10 clocks using implicit
syntax. Note the dynamic changes in the activities participating to the clock.
In Figure~\ref{fig:clock-ex1-code}, all activities blocks until the primary
activity reaches the end of {\tt finish}. The number of participants gradually
decreases, since they do not have the same trip count.
Figure~\ref{fig:clock-ex2-code} is an example of the opposite behavior, where
the number of participants gradually increases.


\begin{figure}[h]
        \begin{subfigure}[b]{0.4\columnwidth}
\begin{lstlisting}
clocked finish
   for (i=1:N)
      clocked async {
         for (j=i:N)
            advance;
            S0;
      }

\end{lstlisting}
		\caption{\label{fig:clock-ex1-code}Barrier-like synchronization with implicit clocks}
        \end{subfigure}
\hspace{0.1\columnwidth}
        \begin{subfigure}[b]{0.4\columnwidth}
                \centering
                \includegraphics[width=1\columnwidth]{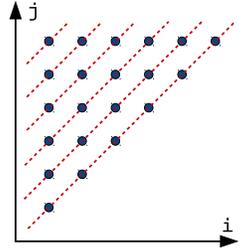}
                \vspace{-0.3cm}
		\caption{\label{fig:clock-ex1-is}Parallel iterations of \ref{fig:clock-ex1-code}}
        \end{subfigure}

        \begin{subfigure}[b]{0.4\columnwidth}
\begin{lstlisting}
clocked finish
   for (i=1:N)
      clocked async {
         for (j=i:N)
            advance;
            S0;
      }
      advance;
\end{lstlisting}
		\caption{\label{fig:clock-ex2-code}Slightly different synchronization pattern}
        \end{subfigure}
\hspace{0.1\columnwidth}
        \begin{subfigure}[b]{0.4\columnwidth}
                \centering
                \includegraphics[width=1\columnwidth]{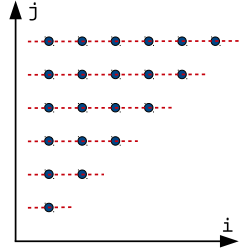}
                \vspace{-0.3cm}
		\caption{\label{fig:clock-ex2-is} Parallel iterations of \ref{fig:clock-ex2-code}}
        \end{subfigure}
\caption{\label{fig:clock-ex}Examples of X10 clock usage.
Figure~\ref{fig:clock-ex1-code} illustrates barrier-like synchronization where
$N$ activities spawned by the {\tt i} loop synchronizes every step of the {\tt
j} loop. Every spawned activity immediately reaches an {\tt advance}, and
blocks until the primary activity reaches the end of the {\tt finish} block,
de-registering itself from the clock. Figure~\ref{fig:clock-ex2-code} is a
slight variation, where the primary activity execute an {\tt advance} after
each spawn.  Thus, the spawned activities may proceed before all $N$ activities
are spawned, leading to a different parallel execution.}
\end{figure}

\section{Array Dataflow Analysis for Polyhedral X10 Programs}\label{sec:background}

Our work is an extension to the work by Yuki et al.~\cite{yuki2013array}, which
only handles {\tt async} and {\tt finish}.  These authors present a concise (and
affine) expression of the happens-before relation for X10 programs with {\tt
async} and {\tt finish}. The happens-before relation is then used to extend
array dataflow analysis~\cite{Feau:91} that finds precise (statement
\emph{instance}-wise and array \emph{element}-wise) dependence information.  We
briefly describe the key ideas that we reuse in this paper.

\subsection{Paths and Iteration Vectors} \label{sect:paths}
The statement \emph{instances} are identified by a vector of integers, 
called an iteration vector. This vector is computed (symbolically) as a 
path from the root to nodes in the AST.
As the AST is traversed, values are appended to the vector based on the 
following rules:
\begin{itemize}
\item {\tt Sequence}: Integer $x$ when taking the $x$-th branch of a sequence.
\item {\tt For}: Loop iterator $i$.
\item {\tt Async}: $a$
\item {\tt Finish}: $f$
\end{itemize}
The iteration vector for a statement instance is obtained by instantiating 
the loop iterators to integer values.
This is similar to the conventional iteration vectors used in the 
polyhedral literature, with the addition of $a$ and $f$. Due to structural
constraints, when two iteration vectors are compared, $a$ and $f$ are never 
compared to anything but themselves, and thus their order is irrelevant.

\subsection{Happens-Before Relation}
The happens-before relation for X10 programs with {\tt finish} and {\tt async}
is formulated as an incomplete lexicographic order.  For sequential programs,
the full lexicographical order denotes the execution order.  In the presence of
parallelism, the execution order is no longer total. Yuki et
al.~\cite{yuki2013array} show that for the {\tt finish/async} subset of X10
programs, the happens-before relationship can be expressed as an incomplete
lexicographic order.

The strict lexicographic order of two distinct such iteration $u$ and $v$ 
is defined as follows:
\begin{eqnarray}\small
  u \ll v & \equiv & \bigvee_{p \ge 0} u \ll_p v \label{depthwise},\\
  u \ll_p v & \equiv & \left( \bigwedge_{k=1}^{p} u_k = v_k \right) 
                    \wedge (u_{p+1} < v_{p+1})
\end{eqnarray}

%

The incomplete version restricts the dimensions to be compared to some subset 
$I$. Intuitively, the set $I$ is constructed such that the dimensions that do not
contributed to happens-before relation, due to concurrent execution, are removed.
The happens-before relation (denoted as $\prec$) is defined as follows:
\begin{equation} \label{incomplete:lex}
 u \prec v \equiv \bigcup_{k \in I} u \ll_k v
\end{equation}

\subsection{Race Detection through Dataflow Analysis}\label{sec:ppopp-ada}
Using the iteration vectors and happens-before ordering, the authors develop
an extension to the array dataflow analysis~\cite{Feau:91} for X10 programs.
Array dataflow analysis finds the statement instance that produced the 
value used by an instance of a read.

Given reader and writer statements $R,W$, and memory access functions $f_R,f_W$,
the set of potential sources is defined by:
\begin{eqnarray}
   r & \in & D_R \label{context}, \\
   w & \in & D_W, \label{existence}\\ 
   f_W(w) & = & f_R(r), \label{conflict}\\
\neg (r \prec w) & \wedge & r \not = w \label{order}
\end{eqnarray}
where 
\begin{itemize}
\item Constraints \ref{context} and \ref{existence} restrict the statement 
instances to its domain (the set of legal iterations),
\item Constraint \ref{conflict} restricts to those that access the same 
array \emph{element}, and 
\item Constraint \ref{order} excludes writers that happen after reads, 
and writes by the same statement instance.
\end{itemize}
In a sequential program, $\prec$ is total, hence 
$\neg (v \prec w) \wedge v \not = w \equiv w \prec v$, which is the 
usual formulation \cite{Feau:91}.

The above gives a set of writer instances $w$ that may be a producer for a 
read instance $r$ for a single writer statement $W$.
The proposed analysis proceeds by finding the most recent 
$w$ among all statements that write to the same array.
Since the happens-before relation is not total, the most recent 
$w$ may not be unique, and there is a race when a producer cannot be uniquely 
identified.
There are two kinds of races:
\begin{itemize} 
\item Read-Write Race: When an instance of reader ($r$) and an instance of
writer ($w$) are not ordered. The write may or may not happen before the read,
and thus the result is not deterministic.
\item Write-Write Race: When two writer instances, not necessarily of the same
statement, is not ordered among themselves, but both of them happen-before the reader instance.
\end{itemize}

Our proposed approach consists of disproving all races found
by using additional happens-before order deduced from clocks.

\section{Semantics of Clocks}\label{sec:clocks}
  In this section, we define the operational semantics, and the happens-before
relation for the subset of X10 we consider.

\subsection{Operational Semantics}\label{sec:sem}
We provide a simple, concise structural operational semantics (SOS) for the
fragment of X10 considered in this paper. The semantics is based on
the same ideas as  \cite{yuki2013array}, but extends it in two
ways. It provides a treatment of full sequential composition
(permitting, e.g.{} \code{\{\{s1; s2;\}s3;\}} -- 
\cite{yuki2013array} permits only \code{\{s1;\{s2; s3;\}}). More
importantly, it provides a formal treatment for clocks that is
considerably simpler than \cite{x10-concur05}.

We assume that a set of (typed) locations \Loc, and a set of values, \Val, is
given.  \Loc{} typically includes the set of variables in the program under
consideration.  With every d-dimensional $N_1\times\ldots \times N_d$
array-valued variable {\tt a} of type array are associated a set of distinct
locations, designated {\tt a(0,\ldots,0), \ldots, a($\mathtt{N_1}$-1, \ldots,
  $\mathtt{N_d}$-1)}.
The set of values includes integers and arrays.

A {\em heap} is a partial (finite) mapping from {\Loc} to \Val. For $h$ a
heap, $l$ a location and $v$ a value by $h[l=v]$ we shall mean the heap that
is the same as $h$ except that it takes on the value $v$ at $l$. By $h(l)$ we
mean the value $e$ to which $h$ maps $l$. We let $H$ denote the space
of all heaps. 

\begin{definition}[Statements]\label{def:grammar}
The statements are defined by the productions: 

\begin{tabular}{lllll}
\multicolumn{2}{l}{(Statements)} & $s$ & {::=} & \\
&\multicolumn{2}{l}{b;} &\multicolumn{2}{l}{\mbox{\em Basic statements.}}\\
&\multicolumn{2}{l}{\Advance;} &\multicolumn{2}{l}{\mbox{\em Clock advance statement.}}\\
&\multicolumn{2}{l}{$\{t \, s\}$} &\multicolumn{2}{l}{\mbox{\em Execute $s$ then $t$.}}\\
&\multicolumn{2}{l}{$\FOR{x}{e1}{e2}{s}$} &\multicolumn{2}{l}{\mbox{\em Execute $s$ for
  x in  $e1\ldots e2$.}}\\
&\multicolumn{2}{l}{\async{s}} &\multicolumn{2}{l}{\mbox{\em Spawn $s$.} }\\
&\multicolumn{2}{l}{\finish{s}} &\multicolumn{2}{l}{\mbox{\em Execute $s$ and wait for termination.}}\\
\end{tabular}

We will assume that the set of basic statements includes \Skip, 
a statement that immediately terminates with an unchanged heap.
\end{definition}
We let $S$ denote the space of all statements.

Procedures (methods) can be defined in X10 in a manner familiar
from object-oriented programming languages like Java. Unlike Cilk,
concurrency constructs in X10 can cross procedure boundaries. For
example, the body $s$ of a $\finish{s}$ or an $\async{s}$ could
contain a call to a method whose body contains an $\async{t}$, a
$\finish{t}$, or an $\Advance$.  For the purposes of this paper we do
not formally model procedure calls, leaving it for future work. The
main implication is that, in general, static concurrency analysis of X10
programs involves inter-procedural reasoning. 

\paragraph{Execution relation.}
As is conventional in SOS, we shall take a {\em configuration} to be a pair
\tuple{s,h} (representing a state in which $s$ has to be executed in the heap
$h$) or $h$ (representing a terminated computation.) Formally, the
space of configurations $K$ is given by $K= (S \times H)+H$, where
$\times$ represents the cross product of two sets and $+$ their
disjoint union.

The operational execution relation $\derives$ is defined as a binary relation
on configurations.  We use the ``matrix'' convention for presenting rules
compactly. A rule such as:
$$
\from{
  \begin{array}{l}
    c\\
    \gamma \derives \gamma_0 \alt \ldots\alt\gamma_{n-1}
  \end{array}
}
\infer{
  \begin{array}{l}
    (d_0)\ \gamma^0 \derives \delta^0_0 \alt \ldots\alt\delta^0_{n-1}\\
    \ldots\\
    (d_{m-1})\ \gamma^{m-1} \derives \delta^{m-1}_0\alt\ldots\alt \delta^{m-1}_{n-1}\\
  \end{array}
    }
$$
\noindent (with $p\geq 0,m>0,n>0$) is taken as shorthand for $m\times n$
rules: infer $\gamma^i \derives \delta^i_j$  from
$c, d_i\ \gamma\derives \gamma_j$, for $i<m,j<n$. Here, $c$  and $(d_i)$ are sequences of conditions, omitted when the sequence is empty.

To define the axioms and rules of inference we need two auxiliary structural predicates on statements. They define what it means for a statement to be {\em asynchronous} and {\em synchronous}. A statement is asynchronous if it is an \async{s}, or a sequential composition of asynchronous statements. 
\begin{equation}\label{prop:async-context}
\begin{array}{cc}
\vdash \isasync{\async{s}} &
\from{
\vdash \isasync{s}
}\infer{
\vdash \isasync{\FOR{x}{e_1}{e_2}{s}}
}\\
\quad\\
\from{
\begin{array}{cc}
\vdash \isasync{s} & \vdash \isasync{t}
\end{array}
}\infer{
\vdash \isasync{\{s,t\}}
}
\end{array}
\end{equation}
A statement is synchronous if it is not asynchronous. We can give a positive definition thus:
\begin{equation}\label{prop:sync-context}
\begin{array}{cc}
\vdash \issync{b;} &
\vdash \issync{\Advance{};} \\
\quad\\
\vdash \issync{\finish{s}} &
\from{
\vdash \issync{s}
}\infer{
\begin{array}{l}
\vdash \issync{\{s\,t\}} \\
\vdash \issync{\{t\,s\}}\\
\vdash \issync{\FOR{x}{e_1}{e_2}{s}} 
\end{array}
}
\end{array}
\end{equation}

The following proposition is established by structural induction.
\begin{proposition}
For any statement $s$, either $\vdash \isasync{s}$ xor $\vdash \issync{s}$.
\end{proposition}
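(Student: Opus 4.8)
The plan is to argue by structural induction on the statement $s$, establishing simultaneously the two halves of the exclusive disjunction: that at least one of the predicates holds, and that they never hold together. The key observation that makes everything routine is that the two rule systems in \eqref{prop:async-context} and \eqref{prop:sync-context} are \emph{syntax-directed}: the conclusion of every axiom and every inference rule is headed by a determined top-level constructor, and no two rules (across the two systems) share a conclusion with the same leading constructor. The forms $\async{s}$, $\FOR{x}{e_1}{e_2}{s}$ and sequencing head the \isasync{} rules, while $b;$, $\Advance;$, $\finish{s}$, $\FOR{x}{e_1}{e_2}{s}$ and sequencing head the \issync{} rules. First I would record the corresponding inversion facts: for a statement of a given form, a predicate is derivable only through the unique rule whose conclusion matches that form, so the premises of that rule become \emph{necessary} conditions, not merely sufficient ones. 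These inversions are immediate from the absence of overlapping conclusions.

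With inversion in hand, the base cases and the non-overlapping inductive cases are direct. For a basic statement $b;$ and for $\Advance;$, only an \issync{} axiom applies and no \isasync{} rule is headed by these forms, so exactly \issync{} holds. Dually, $\async{s}$ is derivable only through the \isasync{} axiom and is never the head of an \issync{} rule, so exactly \isasync{} holds; and $\finish{s}$ is an \issync{} axiom with no matching \isasync{} rule, giving exactly \issync{}.

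The two interesting cases are the loop and the sequence, where both systems offer a rule and the induction hypothesis does the work. For $\FOR{x}{e_1}{e_2}{s}$, inversion says $\isasync{\FOR{x}{e_1}{e_2}{s}}$ holds iff $\isasync{s}$, and $\issync{\FOR{x}{e_1}{e_2}{s}}$ holds iff $\issync{s}$; since by the induction hypothesis exactly one of these holds for $s$, exactly one holds for the loop. For a sequence $\{s\,t\}$, inversion gives that \isasync{} holds iff \emph{both} $\isasync{s}$ and $\isasync{t}$, whereas \issync{} holds iff \emph{at least one} of $\issync{s}, \issync{t}$ (the two \issync{} sequence rules together cover a synchronous component in either position). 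Applying the induction hypothesis componentwise: if both components are asynchronous the sequence is asynchronous and, no component being synchronous, it is not synchronous; otherwise some component is synchronous, making the sequence synchronous, and since that component is then not asynchronous the sequence is not asynchronous. In each subcase precisely one predicate holds.

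I expect the only genuine subtlety to be the inversion step for the sequence, namely making explicit that the pair of \issync{} sequence rules jointly characterize synchrony of a composition as \emph{some component is synchronous}, while the single \isasync{} sequence rule characterizes asynchrony as \emph{both components are asynchronous}. Once these two characterizations are pinned down, the exclusivity clause (``not both'') follows purely from the componentwise exclusivity supplied by the induction hypothesis, so no separate argument against simultaneous derivability is required.
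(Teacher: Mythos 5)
Your proof is correct and follows exactly the route the paper intends — the paper states only that the proposition ``is established by structural induction,'' and your argument is that structural induction, carried out in full with the appropriate inversion lemmas for the loop and sequence cases. (One small phrasing slip: the claim that no two rules across the two systems share a leading constructor is contradicted by the \texttt{for} and sequence cases, but you immediately treat those as the overlapping cases and handle them correctly via the induction hypothesis, so nothing is lost.)
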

Now we turn to the axiomatization of the transition relation. This is the same as \cite{yuki2013array} except that we permit the first statement of a sequential composition to be arbitrarily nested. 

\paragraph{Basic statements.} \label{sem:basic}
We assume that basic statements come with a definition of the inference relation on configurations. For instance, if assignment were included as a basic statement, it would be formalized thus:
\begin{equation}\label{rule:base}
  \from{l= h(a), v=h(e)}
\infer{\tuple{a=e,h} \derives h[l=v]}
\end{equation}

Similarly, \Skip{} is formalized thus:
\begin{equation}\label{rule:skip}
\tuple{\Skip,h} \derives h
\end{equation}

\paragraph{Sequencing, finish, async.}
\begin{equation}\label{rule:seq}
  \from{\tuple{s, h} \derives \tuple{s', h'} \alt h'}
\infer{
  \begin{array}{rcll}
    \tuple{\async{s}, h} &\derives& \tuple{\async{s'}, h'} &\alt h'\\
    \tuple{\finish{s}, h} &\derives& \tuple{\finish{s'}, h'}&\alt h'\\
    \tuple{\{s\, t\}, h} &\derives& \tuple{\{s'\, t\}, h'} &\alt \tuple{t,h'}\\
 (\vdash \isasync{t})  \tuple{\{t\, s\}, h} &\derives& \tuple{\{t\,s'\}, h'}&\alt \tuple{t,h'}\\
  \end{array}}
\end{equation}
As in \cite{yuki2013array} we can read these rules as specifying which substatements are activated when the statement itself is activated. The last (``out of order'') rule captures the essence of asynchronous execution: it permits the second statement to be activated if the first is asynchronous. 

\paragraph{For loop transitions.}
These rules are unchanged from \cite{yuki2013array}.
The first \texttt{for} rule terminates execution of the \texttt{for} statement if its
lower bound is greater than its upper bound.

\begin{equation}\label{rule:for-base}
  \from{l= h(e_0), u=h(e_1), l>u}
\infer{\tuple{\FOR{x}{e_0}{e_1}{s},h} \derives h}
\end{equation}

The recursive rule performs a ``one step'' unfolding of the \texttt{for} loop.
Note that the binding of $x$ to a value $l$ is represented by applying the
substitution $\theta= x \mapsto l$ to $s$, 
rather than by adding the binding to the heap. This is permissible because $x$
does not represent a mutable location in $s$. 
\begin{equation}\label{rule:for-first}
  \from{
  \begin{array}{rcl}
\multicolumn{3}{l}{l=h(e_0), u=h(e_1), l\leq u, m=l+1,t=s[l/x]}\\
\tuple{t, h} &\derives& \tuple{T', h'} \alt h'
  \end{array}}
\infer{
  \begin{array}{rcl}
\tuple{\FOR{x}{e_0}{e_1}{s},h} &\derives& \\
\multicolumn{3}{l}{\quad\quad\quad\tuple{\{T'\,\FOR{x}{m}{u}{s}\},h'}  \alt     \tuple{\FOR{x}{m}{u}{s},h'}} 
  \end{array}}
\end{equation}

\paragraph{Clock transitions.}
First we define what it means for a statement to be {\em
  stuck}. Intuitively, a statement is stuck if every activated sub-statement is an \Advance. \footnote{We use the matrix notation defined above for transition relations, adapting it for \stuck{s} judgements.}
\begin{equation}\label{def:stuck}
  \begin{array}{c}
    \vdash \stuck{\Advance;} \\ \quad\\
    \from{
      \vdash \stuck{s}
    }\infer{
      \begin{array}{l}
        \vdash \stuck{\async{s}} \\
        (\vdash \issync{s}) \vdash \stuck{\{s\,t\}} \\
        (\vdash \isasync{s}\ \vdash\stuck{t}) \vdash \stuck{\{s\,t\}} 
      \end{array}
    }
  \end{array}
\end{equation}

\begin{proposition}[Stuck configurations are stuck] \label{prop:stuck}
Let $\derives'$ be a transition relation on
  configurations defined by the rules introduced so far, i.e. 
  Rules~(\ref{rule:base},\ref{rule:skip},\ref{rule:seq},\ref{rule:for-base},\ref{rule:for-first}). Then 
  $\vdash   \stuck{s}$ iff for no $t,h,h'$ is it the case that
  $\tuple{s,h} \derives' \tuple{t,h'}$ or $\tuple{s,h} \derives'h'$.
\end{proposition}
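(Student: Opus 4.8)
The plan is to prove the biconditional by structural induction on $s$, reading the right-hand side with its quantifiers made explicit: the soundness direction asserts that $\vdash\stuck{s}$ implies no heap admits a transition, while the completeness direction asserts that $\not\vdash\stuck{s}$ implies that some heap admits a transition $\tuple{s,h}\derives'\tuple{t,h'}$ or $\tuple{s,h}\derives'h'$. The inductive hypothesis is this full biconditional for every proper substatement of $s$.

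The base and simple cases I would dispatch directly. A basic statement $b;$ is never derivable in $\stuck{\cdot}$, so I must exhibit a transition; here I rely on the standing assumption (instantiated by \ref{rule:base} and \ref{rule:skip}) that basic statements are reducible. Dually, $\Advance;$ is the unique axiom of $\stuck{\cdot}$, and since $\derives'$ is built from \ref{rule:base}, \ref{rule:skip}, \ref{rule:seq}, \ref{rule:for-base}, \ref{rule:for-first} only --- the clock rules being deliberately excluded --- no rule matches $\Advance;$, so it has no transition. For $\async{u}$ the sole applicable step rule is the first line of \ref{rule:seq}, giving that $\async{u}$ steps exactly when $u$ does, while the only stuck rule for it gives $\stuck{\async{u}}$ exactly when $\stuck{u}$; the hypothesis closes this case. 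For a loop $\FOR{x}{e_1}{e_2}{u}$ the predicate again reports not-stuck, and the completeness obligation is discharged neatly by the quantifier over heaps: choosing any heap in which the lower bound exceeds the upper bound lets \ref{rule:for-base} fire, witnessing a transition regardless of the body. This is exactly why the absence of a loop rule in \ref{def:stuck} is consistent with the claim.

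The crux, and the step I expect to be the main obstacle, is the sequence $\{s\,t\}$, where the two step rules (the third and fourth lines of \ref{rule:seq}) and the two stuck rules of \ref{def:stuck} interact and are gated by the synchronous/asynchronous character of the leading statement $s$. The step rules say that $\{s\,t\}$ can move when $s$ moves, or, provided $\isasync{s}$, when $t$ moves; the stuck rules say that $\{s\,t\}$ is stuck when $s$ is stuck and synchronous, or when $s$ is stuck and asynchronous and $t$ is stuck. I would invoke the preceding dichotomy --- every statement is synchronous xor asynchronous --- together with the inductive hypothesis, and then the equivalence reduces to the Boolean absorption $\neg a \vee (a \wedge \neg q) = \neg a \vee \neg q$, with $a$ standing for $\isasync{s}$ and $q$ for ``$t$ can step''. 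The delicate point, easy to overlook, is that the out-of-order stuck rule requires the leading statement to be stuck as well as asynchronous, exactly mirroring the side condition $\isasync{s}$ of the out-of-order step rule; were the stuckness of $s$ dropped, the two sides would disagree precisely when $s$ is asynchronous and reducible while $t$ is stuck, since then $\{s\,t\}$ still steps by the in-order rule. A little bookkeeping over heaps --- taking the witnessing heap from whichever component can move --- completes the completeness direction, and the same case split, read with its soundness half, rules out every transition in the stuck case.
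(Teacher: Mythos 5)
The paper states this proposition without an accompanying proof, so there is nothing to compare your argument against; judged on its own, your induction is sound where it applies --- the $b;$, $\Advance;$, and $\async{u}$ cases, and especially the sequence case, where your reduction to the absorption identity $\neg a \vee (a \wedge \neg q) = \neg a \vee \neg q$ correctly matches the two stuck rules for $\{s\,t\}$ in (\ref{def:stuck}) against the in-order and out-of-order lines of Rule~(\ref{rule:seq}). But there is a genuine gap: you never treat the production $\finish{u}$. This is not mere bookkeeping, because it is exactly where the stated biconditional breaks. No rule of (\ref{def:stuck}) derives $\vdash\stuck{\finish{u}}$, so your completeness direction must exhibit a transition of $\tuple{\finish{u},h}$ for some $h$; the only rule of $\derives'$ with a \texttt{finish} on the left is the second line of (\ref{rule:seq}), whose premise requires $u$ itself to step. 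Take $u=\Advance;$ (or any stuck $u$): then $\tuple{\finish{\Advance;},h}$ has no $\derives'$-transition from any heap --- Rule~(\ref{rule:clock}) being excluded by hypothesis --- yet $\vdash\stuck{\finish{\Advance;}}$ is not derivable, so the ``iff'' fails. Any proof must either restrict the class of statements (the paper in effect only applies $\stuck{\cdot}$ to loop-free bodies of a single innermost \texttt{finish}) or explain why such $s$ cannot arise; omitting the case silently hides the one place the claim needs repair.

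The same failure mode undermines your \texttt{for} case. The witness ``any heap in which the lower bound exceeds the upper bound'' need not exist: for literal bounds, as in $\FOR{x}{1}{5}{\Advance;}$, Rule~(\ref{rule:for-base}) can never fire, Rule~(\ref{rule:for-first}) demands that the instantiated body step, and that body is stuck, so again the statement has no $\derives'$-transition while not being derivably stuck. (There is also a minor technical wrinkle: the premise of (\ref{rule:for-first}) concerns $s[l/x]$, which is not a structural substatement, so the induction should be on statement size or on derivations rather than on raw syntax.) When loop bodies are never stuck these problems vanish, which is consistent with the paper's standing assumption that loops are unrolled and advances are governed by the innermost \texttt{finish} before the stuck analysis is invoked --- but that restriction has to be made explicit for the proposition, and for your proof, to go through.
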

\def\derivesNext{\Longrightarrow}
Next we define a relation on statements. We write $s \derivesNext t$ (and say ``$s$ yields $t$ after a clock step''). 

\begin{equation}\label{trans:derivesNext}
  \begin{array}{cc}
     \Advance; \derivesNext \Skip; &
     b; \derivesNext b; \\
\quad\\
    \from{
      s \derivesNext t
    }\infer{
      \begin{array}{rcl}
        \async{s} &\derivesNext& \async{t} \\
        (\vdash \issync{s}) \{s\, u\} &\derivesNext& \{t\,u\}\\
      \end{array}
    } & \\
\quad\\
    \from{
      \begin{array}{ccc}
        \vdash \isasync{s} & s \derivesNext s' & t \derivesNext t'
    \end{array}}
    \infer{
      \{s\,t\} \derivesNext \{s'\,t'\}
    }
  \end{array}
\end{equation}

Now we can define the transition relation. If $s$ is stuck, and $s$ yields $t$
after a clock step, then $\finish{s}$ can transition to $\finish{t}$, and leave
the heap unchanged:
\begin{equation}\label{rule:clock}
 \from{
\begin{array}{cc}
  \vdash \stuck{s}  & s \derivesNext t
\end{array}
}\infer{
 \tuple{\finish{s}, h} \derives \tuple{\finish{t},h}
}
\end{equation}
The Rules~(\ref{rule:base}, \ref{rule:skip}, \ref{rule:seq},
\ref{rule:for-base}, \ref{rule:for-first} and \ref{rule:clock}) complete the
definition of the transition relation $\derives$.

\begin{proposition} The only terminal configurations for the
  transition system $(K, \derives)$ are of the form $h\in H$.
\end{proposition}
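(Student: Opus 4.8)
The plan is to prove the equivalent statement that \emph{every} configuration of the form $\tuple{s,h}$ admits at least one $\derives$-successor; the configurations $h \in H$ are vacuously terminal, since no inference rule carries a bare heap on its left-hand side. Since the rules defining $\derives$ are exactly those defining $\derives'$ together with the clock rule~(\ref{rule:clock}), every $\derives'$-transition is also a $\derives$-transition, so it suffices to treat the configurations that are terminal for $\derives'$.

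First I would invoke Proposition~\ref{prop:stuck}: $\tuple{s,h}$ is $\derives'$-terminal exactly when $\vdash \stuck{s}$. This splits the argument into two cases. If $\not\vdash\stuck{s}$, then $\tuple{s,h}$ already has a $\derives'$-successor, hence a $\derives$-successor, and we are done. The difficulty is concentrated in the case $\vdash\stuck{s}$, where the only rule that can fire is the clock rule~(\ref{rule:clock}).

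The key lemma I would establish for that case is \emph{totality of the clock step on stuck statements}: if $\vdash\stuck{s}$ then there exists $t$ with $s \derivesNext t$. This goes by induction on the derivation of $\vdash\stuck{s}$ from the rules in~(\ref{def:stuck}), each case being matched by a corresponding rule of~(\ref{trans:derivesNext}): the base case $\Advance;\derivesNext\Skip;$; the $\async{s}$ case from the inductive image of $s$; and the two sequencing cases, discharged by the $\{s\,u\}$ rule when $\vdash\issync{s}$ and by the parallel rule when $\vdash\isasync{s}$, the preceding dichotomy proposition (asynchronous xor synchronous) guaranteeing these side conditions are exhaustive. With this lemma in hand, a stuck body can always take a clock step, so a statement $\finish{s}$ with $\vdash\stuck{s}$ fires~(\ref{rule:clock}) and transitions to $\finish{t}$, leaving the heap unchanged.

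The step I expect to be the main obstacle is bridging from ``$s$ is stuck'' to ``the clock rule is applicable'', because~(\ref{rule:clock}) only fires for a statement of the shape $\finish{\cdot}$, whereas a bare stuck statement such as $\Advance;$ or $\{\async{\Advance;}\,\async{\Advance;}\}$ has no successor at all. The resolution is that such statements cannot arise under the well-formedness discipline of the implicit-clock syntax, in which every $\Advance$ is governed by a $\finish$; I would make this precise by showing that in any well-formed (hence any reachable) program a maximal stuck sub-statement is exactly the body of its governing $\finish$, so that the enclosing $\finish$---which is itself never stuck---supplies the $\derives$-transition, either directly via~(\ref{rule:clock}) or, when embedded in a larger context, through the sequencing and finish rules of~(\ref{rule:seq}). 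Establishing this correspondence between the structural predicate $\stuck{\cdot}$ and the governing-$\finish$ condition is where the real content of the proof lies.
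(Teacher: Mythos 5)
The paper states this proposition without proof, so there is no official argument to compare against; I can only assess your proposal on its own terms. Your restatement of the claim as a progress property and your key lemma --- totality of $\derivesNext$ on stuck statements, proved by induction on the $\vdash\stuck{s}$ derivation with the asynchronous/synchronous dichotomy guaranteeing exhaustiveness --- are both correct and genuinely needed, since rule~(\ref{rule:clock}) is the only source of transitions for a finish whose body is stuck. You have also put your finger on the real issue: over the full configuration space $K$ the proposition is literally false ($\tuple{\Advance;,h}$ is terminal and is not a heap), so some well-formedness hypothesis on advances must be made explicit and shown invariant under $\derives$.

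Two soft spots remain. First, the case split through Proposition~\ref{prop:stuck} does not actually cover all configurations: $\finish{\Advance;}$ has no $\derives'$-successor yet is not derivably stuck from~(\ref{def:stuck}) (no rule concludes $\stuck{\finish{s}}$), so the chain ``no $\derives'$-successor, hence stuck, hence the clock rule fires'' breaks precisely on the configurations where the clock rule is the intended source of progress. The robust route is a structural induction on well-formed $s$ showing that $\tuple{s,h}$ either $\derives'$-steps or contains, in an evaluation position reachable through the contexts of rule~(\ref{rule:seq}), some $\finish{u}$ with $\vdash\stuck{u}$, to which~(\ref{rule:clock}) and your totality lemma apply. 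Second, your bridging claim that ``a maximal stuck sub-statement is exactly the body of its governing finish'' is not literally true (in $\finish{\{b;\,\Advance;\}}$ the maximal stuck sub-statement is the bare $\Advance;$, not the body) and is not quite the right invariant. What you need instead is: (i) every stuck statement contains an advance not enclosed by any finish within it (an immediate induction on~(\ref{def:stuck})), so a well-formed top-level statement is never stuck; and (ii) the property that every advance is governed by a finish is preserved by each rule of $\derives$. You correctly flag this step as the main obstacle, which is the right diagnosis, but as written it is still a plan rather than a proof.
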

\paragraph{Semantics.}
We now define appropriate semantical notions.

\begin{definition}[Semantics]
Let $\starderives$ represent the reflexive, transitive closure of $\derives$. 
The operational semantics, $\O{s}$ of a statement $s$ is the relation
$$\O{s}\defeq \{(h,h') \alt \tuple{s,h}\starderives h'\}$$

Sometimes a set of \emph{observable} variables is defined by the programmer,
and the notion of semantics appropriately refined:
\[\O{s,V}\defeq \{(h,\restr{h'}{V}) \alt \tuple{s,h}\starderives h'\}\]
\noindent where for a function $f:D \rightarrow R$ and $V \subseteq D$ by
$\restr{f}{V}$ we mean the function $f$ restricted to the domain $V$.

\end{definition}

\subsection{The Clocked Happens-Before Relation}\label{sec:happens-before}
The semantics of Section \ref{sec:sem} is operational: it can be considered
as a blueprint for a rudimentary X10 interpreter. Concurrency is represented
as non determinism. An interpretation of a program $s$ is a linear succession
of reductions according to the rules of Section \ref{sec:sem}. Each reduction
is associated to the execution of a basic statement or to the crossing
of a barrier. Due to the fact that a sequence $\{s \, t\}$ can be reduced
in two ways if $\isasync{s}$ is true, a program can have many interpretations,
all of which are interleaves of the same set of basic operations.

In contrast, in building the happens-before realation, our aim is to
specify the X10 semantics as a partial order: operation $u$ happens
before $v$ if it occurs to the left of $v$ in every legal interpretation
To simplify the presentation, we will assume that basic statements have
distinct names. For polyhedral X10 programs, we can use \emph{paths}
as defined in Section \ref{sect:paths} for that purpose.

Polyhedral X10 programs have static
control: the set of operations and their execution order are fixed as soon
as a few size parameters are known. Hence, we can simplify the semantics
of Section \ref{sec:sem} by dispensing with the heap. A convenient way of
achieving this simplification is to assume that all basic statements
are skips. Under this assumption, all configutations have the same 
heap as the initial configuration of the program.

As a first effort, we will ignore loops in what follows. Note that when
size parameters are given, loops can be statically expanded into nests
of sequence constructs.

Each step in the interpretation of a program is a deduction according to
the rules of inference of Section \ref{sec:sem};
each such deduction must start with one
of the axioms (\ref{rule:skip}) or (\ref{rule:clock}).
Each reduction is associated either to a basic statement or to the set of
advances which are transformed into skips by rule (\ref{rule:clock}).
No reduction can use more than one axiom, since there is no transitivity
rule for $\derives$. To such an interpretation we can associate the trace 
obtained by succesively appending either the name of the reduced skip or 
the set of transformed advances. Observe that at each step, either the 
number of advances or the
number of basic statements is reduced, hence, all reductions terminates. 
This is an indirect proof that the fragment of X10 we consider has no deadlock. 

\begin{proposition}[Normal Forms] Let $r$ be a program which is not stuck.
Either $r$ does not contain advances, or there is a unique stuck $r'$ such
that every reduction path from $r$ 
which does not use (\ref{rule:clock}) terminates in $r'$
\end{proposition}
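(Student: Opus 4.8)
The plan is to read the statement as a confluence-and-termination result for the clock-free reduction $\derives'$ of Proposition~\ref{prop:stuck}, and to reduce it to the existence of a \emph{unique normal form}. First I would discharge the easy disjunct: if $r$ contains no advances there is nothing further to prove, so I assume from now on that $r$ contains at least one $\Advance$. Under the standing assumption that every basic statement is $\Skip$, rules~(\ref{rule:skip}) and~(\ref{rule:seq}) show that the only $\derives'$-redexes of a configuration are its \emph{activated} $\Skip$-leaves, and firing one deletes that leaf and propagates the deletion upward through~(\ref{rule:seq}) (collapsing a sequence to its continuation, or terminating an enclosing \texttt{async}/\texttt{finish} whose body thereby vanishes). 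Because no $\derives'$-rule ever rewrites an $\Advance$, each step strictly decreases the total number of basic statements; hence $\derives'$ is strongly normalizing, with every path bounded by the initial count of $\Skip$'s.

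Next I would prove confluence via the strong diamond property. Since $\Skip$ has no substatements, two distinct activated $\Skip$-leaves always sit at incomparable positions. So if $\tuple{r,h} \derives' \gamma_1$ fires leaf $p_1$ and $\tuple{r,h} \derives' \gamma_2$ fires leaf $p_2$ with $p_1 \ne p_2$, then $p_2$ survives, still activated, in $\gamma_1$ and symmetrically $p_1$ in $\gamma_2$; the two deletions and their accompanying collapses take place at disjoint positions and therefore commute, so $\gamma_1$ and $\gamma_2$ both reduce in one step to a common $\gamma_3$. When $p_1 = p_2$ the step is deterministic and $\gamma_1 = \gamma_2$. The diamond property yields confluence, and together with strong normalization it gives a \emph{unique} $\derives'$-normal form; this is the candidate $r'$, and since every maximal clock-free path is finite it necessarily terminates in $r'$.

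It remains to certify that $r'$ is \emph{stuck} rather than a terminated heap, and here the advance hypothesis is decisive. No $\derives'$-step ever deletes an $\Advance$: the fired redex is always a $\Skip$, and a sequence $\{s\,t\}$ collapses to $t$ only when its first part $s$ has just terminated, which forces $s$ to have been advance-free; meanwhile rule~(\ref{rule:clock}), the sole rule that turns advances into skips, is forbidden. Thus the number of advances is invariant along every clock-free path, so the normal form of $r$ still contains an $\Advance$ and is in particular a statement, not a heap. By Proposition~\ref{prop:stuck} a statement admitting no $\derives'$-step is exactly a stuck one, whence $\stuck{r'}$; uniqueness and termination-in-$r'$ are then inherited from the confluence and strong normalization already established.

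The step I expect to be the main obstacle is the diamond argument, specifically the bookkeeping around sequence collapse. Firing the first statement of a sequence $\{s\,t\}$ activates its continuation $t$ and so enlarges the redex set, and I must verify that the freshly revealed leaves can neither coincide with nor disturb the other fired leaf $p_2$---which they cannot, since $p_2$ was already activated whereas the new leaves lie strictly inside the previously inactive $t$---and that the two collapses genuinely commute even when one of them terminates an entire \texttt{async} or \texttt{finish} subtree several levels up (in which case $p_2$, being off that branch, merely gets repositioned while staying activated). The only remaining care is the routine confirmation, used in the previous paragraph, that collapsing a just-terminated first component never discards an $\Advance$.
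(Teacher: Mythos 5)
Your proposal is correct and follows essentially the same route as the paper: establish that clock-free reduction terminates (each step removes a basic statement), verify a diamond/commutation property for the only source of non-determinism (two redexes at incomparable positions, i.e.\ the out-of-order sequencing rule), and conclude uniqueness of the normal form by Church--Rosser. You are somewhat more explicit than the paper on two points it leaves implicit --- that a surviving redex remains \emph{activated} after the other one fires (the paper's observation that reduction preserves asynchrony or terminates the component), and that the normal form is genuinely \emph{stuck} because no clock-free step can discard an advance --- which is a welcome completion rather than a divergence.
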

For reasons to be made clear presently, $r'$ will be called the 
\emph{normal form} of $r$, $\mathcal{N}(r)$. If $r$ does not contain advances,
$\mathcal{N}(r)$ is a terminal configuration: a configuration
without a continuation..

\begin{proof} It is clear that the only case in which several reductions
are possible is if $r = \{s \, t\}$ and $\vdash \isasync{s}$. One can either
reduce $s$, giving $\{s' \, t\}$, or $t$, giving $\{s \, t'\}$. This last
term can be reduced further, giving $\{s' \, t'\}$. It easy to convince
oneself that $s'$ is either terminal or $\vdash \isasync{s'}$. In the 
first case, both terms reduce to $t'$. In the second case, $\{s' \, t\}$
reduces to  $\{s' \, t'\}$. In both cases, the reduction system has the
\emph{weak diamond property}, and since reductions terminate, the
unique normal formal form or Church-Rosser \cite[Chapter~3]{Bare:84} 
property, Q.E.D.
\end{proof}

This result is the key to the analysis of clocks in polyhedral X10
programs.

\subsection{The Case of a Single Clock Program}

Consider first a one clock program, i.e. a program of the form 
$\finish{r}$ where $r$ does not contain any clocked finish. This program
has a unique normal form $\finish{s}$ where $s$ is stuck. The reduction can 
only progress by applying rule (\ref{rule:clock}), giving a new program
$\finish{t}$ which may be further reduced. The elaboration of the
initial program therefore proceeds in \emph{phases}, where only rules
(\ref{rule:skip}-\ref{rule:seq}) are applied, separated by applications
of (\ref{rule:clock}). One convenient way of expressing these observations
is to assign a number in sequence to each application of 
rule (\ref{rule:clock}), and to assign a phase number
$\phi(u)$ to each operation $u$ which is executed between applications
$\phi(u)$ and $\phi(u)+1$ of (\ref{rule:clock}). It is then obvious that 
if $\phi(u) < \phi(v)$ 
then $u \hbk v$. The fragment of code which constitutes a phase fits
into the model of \cite{yuki2013array}, and hence has the same happens-before
relation. The clocked happens-before relation is therefore:
\[ u \hbk v \equiv \phi(u) < \phi(v) \vee u \prec v .\]

\subsection{Multiple Clocks}\label{sec:multi-clockHB}

Let us now consider two operation $u$ and $v$ in a multiple clock program.
There exists an innermost finish $F$ which contains both $u$ and $v$.
In $F$, there exists an outermost finish $f_u$ (resp. $f_v$) which
contains $u$ (resp. $v$). Either $f_u$ or $f_v$ or both may be the same
as $F$. If $f_u = f_v = F$, we are back to a single clock program
and the conclusion of the preceding section stands.

Suppose now that neither $f_u$ nor $f_v$ are equal to $F$. In the text
of the program, replace $f_u$ (resp. $f_v$) by a fictitious basic 
statement $U$ (resp. $V$) and evaluate $U \hbk V$, again by the method
of the preceding section. We claim that $u \hbk v \equiv U \hbk V$.
Assume first that $U \hbk V$ is true. In all reductions of the transformed
$F$, $U$ is reduced before $V$. We can construct a reduction of the
original $F$ by replacing the reduction of $U$ by the reduction of
$f_u$, and, by the semantics of finish, no operation of $f_v$ will
execute until $f_u$ has terminated. Hence, $u \hbk v$ is true. The case
$V \hbk U$ is symetric. If neither $U \hbk V$ nor $V \hbk U$ are true,
then there is a reduction in which $U$ and hence $u$ occurs first, and another 
one in which $V$ hence $v$ occurs first, hence neither $u \hbk v$ nor
$v \hbk u$ are true. The remaining two cases can be handled in the same way.

\subsection{Loops}\label{sec:HBloops}
The analysis of loops poses both a practical and a theoretical problem.
On the practical side, when loop bounds are known numbers, loops can be
eliminated in favor of sequences by repeated application of rules
(\ref{rule:for-first}-\ref{rule:for-base}). However, the resulting program
may be so large as to make counting phases unpractical. But in the polytope
model, loop bounds may depend on unknown symbolic parameters, hence the
application of rule (\ref{rule:for-first}) may never terminate. The trick
here is to observe that a program which depends on symbolic constants is
a shorthand for a possibly infinite family of programs, which are obtained by
giving every admissible value to the parameters. For each program in
the family, the conclusions of the preceding section still stand. All that
is needed is to find a closed form for the advance counters $\phi$ and
for the unclocked happens-before relation, $\prec$. For the later, the
authors of \cite{yuki2013array} have given a closed formula as an incomplete
lexicographic order. It remains to find a symbolic way of computing $\phi$.

To this aim, assume that advances are temporarilly considered as ordinary
basic statements, to which $\prec$ applies. One way of interpreting the fact
that a configuration is stuck is to say that no reduction can be done
unless one advance at least is reduced, which can be done only by using
rule (\ref{rule:clock}). The initial advances of a stuck $s$ are the
advances which are replaced by skips before application of (\ref{rule:clock}).

\begin{proposition} If $s$ is stuck, then for every elementary
statement $x$ in $s$, either $s = x$ is an initial advance, or
there exists a unique initial advance $a$ such that $a \prec x$.
\end{proposition}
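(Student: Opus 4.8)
The plan is to prove the statement by structural induction on the derivation of $\vdash \stuck{s}$ (equivalently, on the shape of the stuck statement $s$), following the four rules of~(\ref{def:stuck}). Because the happens-before relation $\prec$ treats a synchronous first component of a sequence as preceding the entire tail, the naive induction hypothesis is too weak. I would therefore strengthen it by tracking, for each \emph{synchronous} stuck $s$, a distinguished ``leading'' initial advance $\ell(s)$ lying on the main thread, characterized by the property that $\ell(s) \prec y$ for every operation $y$ sequenced after $s$, and that $\ell(s)$ is the \emph{only} initial advance of $s$ enjoying this property. I would also prove the reflexive form of the stated claim, i.e.\ that each leaf admits a unique initial advance $a$ with $a \prec x$ \emph{or} $a = x$; the disjunct $a = x$ is exactly the case in which $x$ is itself an initial advance (the ``$s=x$'' alternative).

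Concretely, I would carry two statements simultaneously: (i) for every elementary statement $x$ of a stuck $s$, there is a unique initial advance $a$ with $a \prec x$ or $a = x$; and (ii) if in addition $\vdash \issync{s}$, then $\ell(s)$ exists, is an initial advance, and is the unique initial advance of $s$ below every operation placed after $s$ in sequential order. Write $\mathrm{IA}(s)$ for the set of initial advances of $s$, read off from how $\derivesNext$ rewrites the activated advances of $s$ into skips in~(\ref{rule:clock}).

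The cases proceed as follows. For the base case $s = \Advance;$ both statements are immediate: the single leaf is the unique initial advance and equals $s$, and $\ell(s)=s$. For $s = \async{s'}$, the set $\mathrm{IA}(s)=\mathrm{IA}(s')$ and the restriction of $\prec$ to the leaves are unchanged from $s'$ (wrapping in an \texttt{async} only prepends a common component to every path), so (i) transfers directly from the hypothesis on $s'$, while (ii) is vacuous since $\async{s'}$ is asynchronous. For $s = \{s'\,t\}$ with $\vdash \issync{s'}$ I would split $x$ according to whether it lies in $s'$ or in $t$: for $x$ in $s'$ the hypothesis (i) applies verbatim; for $x$ in $t$ I use that the synchronous $s'$ is sequenced before $t$, so that the main-thread advance $\ell(s')$ reaches $x$, and I set $\ell(s)=\ell(s')$ to re-establish (ii). The remaining case $s=\{s'\,t\}$ with $\vdash\isasync{s'}$ and $t$ stuck has $\mathrm{IA}(s)=\mathrm{IA}(s')\cup\mathrm{IA}(t)$ with $s'$ and $t$ concurrent, so leaves of $s'$ and of $t$ are governed independently by (i) on each component, and if $s$ is synchronous (i.e.\ $t$ synchronous) I put $\ell(s)=\ell(t)$.

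The main obstacle is the uniqueness half of the synchronous-sequence case. It is tempting to think that \emph{all} initial advances of $s'$ precede every $x$ in $t$, which would destroy uniqueness; the point that rescues the argument is that a spawned (asynchronous) sub-activity of $s'$ is awaited by the enclosing \texttt{finish}, not by the internal sequence point, so its initial advance is \emph{concurrent} with the operations of $t$ rather than before them. Pinning this down rigorously requires appealing to the precise shape of $\prec$ across \texttt{async} boundaries from~\cite{yuki2013array} (the incomplete lexicographic order never compares the \texttt{async} component), which yields that exactly one initial advance of a synchronous $s'$, namely the leading main-thread advance $\ell(s')$, lies below the tail, while every spawned advance is $\prec$-incomparable to $t$'s operations. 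I expect this concurrency-versus-precedence bookkeeping, rather than any single computation, to be the delicate part of the proof.
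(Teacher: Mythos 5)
Your proof is correct and takes essentially the same route as the paper's: a structural induction over the rules defining $\vdash \stuck{s}$, with the same case split (advance, \texttt{async}, synchronous-first sequence, asynchronous-first sequence) and the same key observation that initial advances nested under an \texttt{async} are $\prec$-incomparable with the tail of a sequence. The only difference is one of rigor: your strengthened invariant about the leading main-thread advance $\ell(s)$ makes explicit the uniqueness argument that the paper compresses into the phrase ``let $a$ be the initial advance of $t$'', so it is a faithful elaboration rather than a different approach.
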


\begin{proof} The proof is by induction on the $\vdash \stuck{s}$ inference. 
There is nothing to prove if $s$ is an advance. If $s$ is $\async{t}$, then
$t$ is stuck and $x$ is in $t$, and the result follows. If $s$ is
$\{t \, u \}$ and $t$ is asynchronous, then both $t$ and $u$ are stuck
and $x$ is in either $t$ or $u$. If $t$ is synchronous, then $t$ is stuck.
If $x$ is in $t$, then the induction hypothesis applies. If $x$ is
in $u$, let $a$ be the initial adavance of $t$, which exists by the
induction hypothesis. By the semantics of sequential composition, $a \prec x$,
Q.E.D.
\end{proof}

Let $\mathcal{A}$ be the set of advance instances in the program under study.
From the above result follows that, for each statement $u$:

\[ \phi(x) = \mathrm{Card} \{ a \in \mathcal{A} | a \prec x\} .\]

For polyhedral programs, we will show in Section \ref{barvinok} how to
compute closed forms for $\phi$.

\section{Clocks and Races}\label{sec:races}
The important observation is that clocks only add additional synchronizations
among activities, and hence strictly decreases the set of
may-happen-in-parallel iterations.  Therefore, we propose to guarantee
race-freedom by disproving all races found with out taking clocks into
account~\cite{yuki2013array}.

\subsection{Computing the $\phi$ Function}\label{barvinok}
The first step is to automatically compute the $\phi$ functions that define the
happens-before relation with respect to clocks.  What we are interested in is a
function that gives the number of advances, associated with a clock, an
activity has executed before executing an iteration of a statement. Thus, the
computed function must be parametric to the statement \emph{instance} in
question, as well as the program parameters.

This is achieved by constructing the following union of
parametric polytopes, and computing the number of integer points it contains:
\begin{eqnarray}
   x & \in  & D_S \label{phi:statement}, \\
   a & \in  & D_A, \label{phi:advance}\\ 
   a & \prec & x \label{phi:hb}
\end{eqnarray}
where 
\begin{itemize}
\item $D_S$ is the domain of a statement $S$,
\item $D_A$ is the union of domains of {\tt advance} statements that operates with the clock in question, and
\item Constraint (\ref{phi:hb}) restricts the {\tt advance} statement instances to those that happens-before an instance $x$ of $S$.
\end{itemize}
Note that this corresponds to the definition of the $\phi$ function in
Section~\ref{sec:HBloops}.  By treating $x$ as parameters of the polytope, we
obtain a parametric expression of the number of {\tt advance} statements that
happens-before $x$.  We compute such expression for each pair of statement and
clock in the program.

\subsection{Counting Integer Points in Polytopes}
For polyhedral iteration spaces, the question
of counting advances can be cast as counting the number of integer points in
polyhedra.  Ehrhart~\cite{ehrhart1962polyedres} showed that the number of
integer points in a polytope can be expressed as \emph{periodic polynomials}.
We use an algorithm proposed by Verdoolaege et
al.~\cite{verdoolaege2007counting} for computing such polynomials, which
handles parametric polytopes.

\subsection{Disproving Races}
We may refine the dataflow analysis formulation for X10 programs without clocks
(overviewed in Section~\ref{sec:ppopp-ada}) using the new happens-before
relation for clocked programs. The only change required is to replace $\prec$ with $\hbk$.
However, the problem stems from the $\phi$ functions not being affine in general.
Parametric integer linear programming~\cite{Feau:88b} can no longer be used,
and there is no known alternative that can handle polynomial expressions.

Therefore, our proposed solution is to first detect races without clocks taken
into consideration, and then later use constraint solvers to verify if the statement
instances involved in a race can take the same value of $\phi$.

\subsection{Problem Formulation}
The races detected have precise information regarding which statement instances are involved in a race.
Recall that we have two main kinds of races (Section~\ref{sec:ppopp-ada}), Read-Write and Write-Write.
For each detected race, we have the following:
\begin{itemize}
\item $r \in D_R^*$: Read instances involved in the race.
\item $w \in D_W^*(r)$: Write instances involved in the race, \emph{parametric} to $r$.
For Write-Write races, we obtain two of such sets, as we have two writers in conflict.
\end{itemize}

Given a set of clocks $C$ in the program, and $\phi$ functions $\phi_c$ for
each $c \in C$.  Recall (Section~\ref{sec:multi-clockHB})
that among the set of clocks, only one clock is relevant for each pair of statements,
and thus the case with multiple clocks was reduced to single clock case.
Let us define ${\tt reduce}(C, S1, S2)$ to be a function that gives a single clock, $c^*$, 
from the set of clocks $C$ that is relevant when defining the happens-before relation between
instances of statements $S1$ and $S2$.

The problem is to simply check if there exists a pair of instances that are involved in a race,
and take the same value of $\phi_{c^*}$ for each potential race.


The constraints for Read-Write race to occur, and respectively for Write-Write race to occur, 
are the following:

\begin{minipage}{0.425\columnwidth}
\begin{align*}
r &\in D_R^*\\
w &\in D_W^*(r)\\
c^* &= \mathtt{reduce}(C, R, W)\\
\phi_{c^*}(r) &= \phi_{c^*}(w)
\end{align*}
\end{minipage}
\vline
\hspace{0.025\columnwidth}
\begin{minipage}{0.45\columnwidth}
\begin{align*}
w1 &\in D_{W1}^*(r)\\
w2 &\in D_{W2}^*(r)\\
c^* &= \mathtt{reduce}(C, W1, W2)\\
\phi_{c^*}(w1) &= \phi_{c^*}(w2)
\end{align*}
\end{minipage}

\subsection{Undecidability}
To prove that the race problem for clocked X10 is undecidable, we need the
following construction:

Given an arbitrary polynomial $P(x)$ in $n$ variables $x_1,\ldots,x_n$ with
integer coefficients, build an X10 program that has a race if and only
if $P(x)$ has an integral root.

Since deciding if $P(x)$ has a root is undecidable (Hilbert 10th problem),
it follows that the race problem is undecidable.

As we will see later, we may have to build not one X10 program but
a finite number of programs (in fact, $2^n$ programs) such that $P(x)$
has a root iff one of those programs has a race, but this does not
change the conclusion.

\subsubsection{The Shape of the Test Program}
Let us write $P(x) = P_1(x) - P_2(x)$, and consider the following:

\begin{verbatim}
  for(x in D){
    clocked finish{
      clocked async{
        L1;           
        u = f();     //U
      }
      clocked async{
        L2;
        g(u);        //G
      }
    }
  }
\end{verbatim}
where $x$ is in fact a vector of dimension $n$ which scans the domain $D$
to be defined later, and $L1$ (resp. $L2$) is a loop nest which executes
exactly $P_1(x)$ (resp. $P_2(x)$) advances. It is clear that this program
will have an MHP race iff $P(x)$ has a root in $D$. 
\begin{quote}\small
To understand the behaviour of this program, assume first, without
loss of generality, that for a given value of $x$, $P_1(x) < P_2(x)$. 
Then the loops $L1$ and $L2$ will execute in lockstep, $L1$ will 
terminates first and $U$ will be executed. The first
activity will terminate and de-register itself, and then $L2$ will 
execute its remaining iterations, and \emph{then} execute $G$.
Contrarywise, if $P_1(x) = P_2(x)$, $L1$ and $L2$ will
terminate at the same (logical) time, and the execution order of 
$U$ and $G$ will be undefined.
\end{quote}

However,
we must first insure that the program is realistic. Observe that the
number of iterations of a loop nest can never be negative; hence, we
must insure that $P_1(x)$ and $P_2(x)$ are non-negative for $x \in D$.
This can be guaranteed if $D$ is the positive orthant and if
$P_1$ and $P_2$ have positive coefficients. In this way, we will test
only the existence of a \emph{positive} integral root of $P$. To be
complete, we must apply the above construction to  the $2^n$ polynomials 
$P(\epsilon_1 x_1, \ldots, \epsilon_n x_n)$, where the $\epsilon$s
take all combinations of the values $\{+1, -1 \}$.

The fact that $D$ is unbounded is irrelevant, since we do not intend to
run the above program to find races. We just have to be careful in 
writing the $x$ loop, but this is a well know problem (see for instance
the classical proof that $\mathrm{Card}\; N = \mathrm{Card}\; N^2$).

\subsubsection{Constructing Counting Loop Nests}
Our aim now is, given a polynomial $Q(x)$ with positive integer
coefficients, to construct a loop nest which compute $Q(x)$ only using
increments. It will be enough then to replace each incrementation
by an advance to prove the theorem. In the following, we accept
more general forms of increments \texttt{phi~+=~d;} where $d$ is a positive
integer, representing $d$ consecutive advances.

Let us select one variable, say $x_1$ and let us write $Q(x) = Q(x_1, x_r)$
where the vector of variables $x_r$ may be empty. Let $m$ be the degree of $Q$ 
in $x_1$. The first difference of $Q$ is:
\begin{equation}
   Q^{(1)}(x_1, x_r) =  Q(x_1+1, x_r) - Q(x_1,x_r)
\end{equation}
and it is clear that the program

\begin{verbatim}
   phi = Q(0, x_r);
   for(i=0; i<x; i++)
      phi += Q1(i, x_r);
\end{verbatim}
compute $Q(x)$. The degree of $Q^{(1)}(i, x_r)$ is $m-1$ in $i$, 
hence we can iterate this construction to obtain:
\begin{verbatim}
   phi = Q(0, x_r);
   for(i=0; i<x; i++) {
/*    phi += Q1(i, x_r); */
      phi += Q1(0, x_r);
      for(j=0; j<i; j++) {
         phi += Q2(j, x_r);
      }
   }
\end{verbatim}
where $Q^{(2)}$ is the second difference of $Q$  with respect to $x_1$.

We can continue in this way until we reach the $m$-th difference, which is 
independent of its first variable. At this point, all the increments in
the program depend only on $x_r$. We can select another variable
and apply the same construction, until all increments are constant. This
terminates the proof.

\subsubsection{An Example} Let us construct the counting nest for 
$Q(x,y) = x^2 + xy + y^2$. The first difference is $2x+y+1$, hence the
first program is:
\begin{verbatim}
  phi = y*y;
  for(i1 = 0; i1<x; i++)
    phi += 2*i1+y+1;
\end{verbatim}
The second difference is simply 2, hence the second program:
\begin{verbatim}
  phi = y*y;
  for(i1 = 0; i1<x; i1++){
    phi += y+1
    for(i2 = 0; i2<i1; i2++)
      phi+= 2;
  }
\end{verbatim}
At this point, the increments depend only on $y$. Applying 
the same algorithm to $y^2$ and $y+1$ yields the final program:

\begin{verbatim}
  phi = 0;
  for(i3 = 0; i3<y; i3++){
    phi += 1;
    for(i4 = 0; i4<i3; i4++)
      phi += 2;
  }
  for(i1 = 0; i1<x; i1++){
    phi += 1;
    for(i5 = 0; i5<y; i5++)
      phi += 1;
    for(i2 = 0; i2<i1; i2++)
      phi += 2;
  }
\end{verbatim}
Note that a new induction variable is needed for each loop.

\section{Examples}\label{sec:examples}

In this section, we illustrate different types of synchronizations that can be
analyzed by our proposed approach through examples.

\subsection{Barrier-like Synchronization}
The following is a simplified implementation of 1D Jacobi-style stencil
computation using clocks.

\begin{lstlisting}
clocked finish
   for (i=1:N-1)
     clocked async
        for (t=0:T) 
           B[i] = S0(A[i-1], A[i], A[i+1]);
           advance;
           A[i] = S1(B[i-1], B[i], B[i+1]);
           advance;
\end{lstlisting}

The above use of clocks are similar to barriers;
the synchronization does not rely on an activity being \emph{de-registered}
from a clock, aside from the primary activity.

Without taking the additional happens-before relation due to clocks into
account, there are four read-write races in the program.

\begin{itemize}
\item Read $A[i-1]$ by $S0\langle i,t\rangle$ is in read-write race with $S1\langle i-1,t'\rangle$ when $1<i<N$ and $0\le t'\le T$.
\item Read $A[i+1]$ by $S0\langle i,t\rangle$ is in read-write race with $S1\langle i+1,t'\rangle$ when $1\le i<N-1$ and $0\le t'\le T$.
\item Read $B[i-1]$ by $S1\langle i,t\rangle$ is in read-write race with $S0\langle i-1,t'\rangle$ when $1<i<N$ and $0\le t'\le T$.
\item Read $B[i+1]$ by $S1\langle i,t\rangle$ is in read-write race with $S0\langle i+1,t'\rangle$ when $1\le i<N-1$ and $0\le t'\le T$.
\end{itemize}
Note that $t'$ refers to all possible values that $t$ can take. Without the clock synchronization,
all writes to the same element of the array at different time steps are in conflict.

The $\phi$ functions for $S0$ and $S1$ are $\phi_{S0} = 2t$, and $\phi_{S1} = 2t+1$.
The four races can trivially be disproved by using the $\phi$ functions, since
it guarantees $S0$ and $S1$ never execute in parallel.  In this case, the
$\phi$ functions are actually affine, and hence it can directly be incorporated
into array dataflow analysis.

\subsection{Activity Specific $\phi$ Functions}
The following is a parallelization of Gauss-Seidel style stencil computation.
The difference is that the reference $A[i-1]$ refers to a value computed at the
same $t$, rather than $t-1$ in the case of Jacobi style stencils.

\begin{lstlisting}
clocked finish
   for (i=1:N-1)
     clocked async
        for (t=0:T)
           advance;
           A[i] = S0(A[i-1], A[i], A[i+1]);
     	   advance;
     advance;
\end{lstlisting}

The parallelization using clocks is illustrated in Figure~\ref{fig:example-GS}.

\begin{figure}
        \begin{subfigure}[b]{0.5\columnwidth}
                \centering
                \includegraphics[width=0.85\columnwidth]{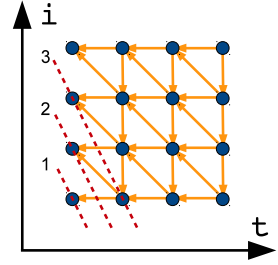}
		\caption{\label{fig:example-GS-original}Original iteration space, and its dependences.
 The dashed lines show sets of parallel iterations.}
        \end{subfigure}
	\vspace{-2.8cm}
	\\
        \begin{subfigure}[b]{0.5\columnwidth}
                \centering
                \includegraphics[width=0.85\columnwidth]{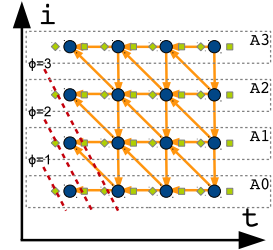}
		\caption{\label{fig:example-GS-X10} Parallelization with X10
using clocks. An activiy execute one row of iterations with {\tt advance} statements
before and after the computation. First three synchronizations are shown as dashed lines.}
        \end{subfigure}
	\hspace{0.05\columnwidth}
	\begin{subfigure}[b]{0.45\columnwidth}
                \centering
                \includegraphics[width=0.9\columnwidth]{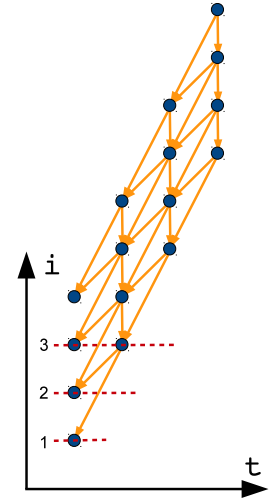}
		\caption{\label{fig:example-GS-skewed}{\tt doall} parallelism
with skewing. The iteration space may be skewed to align the parallel
iterations to one of the axes to enable {\tt doall} type parallelism.}
        \end{subfigure}
	
	\caption{\label{fig:example-GS} Parallelization of Gauss-Seidel Stencil with clocks. }
\end{figure}

There are two read-write races in the program before clocks are taken into consideration.
\begin{itemize}
\item Read $A[i-1]$ by $S0\langle i,t\rangle$ is in read-write race with $S0\langle i-1,t'\rangle$ when $1<i<N$ and $0\le t'\le T$.
\item Read $A[i+1]$ by $S0\langle i,t\rangle$ is in read-write race with $S0\langle i+1,t'\rangle$ when $1\le i<N-1$ and $0\le t'\le T$.
\end{itemize}
The $\phi$ function for $S0$ is $\phi_{S0} = 2t+i$.

Using the $\phi$ function, constraint solvers can easily disprove these races.
The problem reduces to the existence of values of $t$ and $t'$ within its domain that satisfies:
\begin{itemize}
\item $2t+i = 2t'+i-1$, or
\item $2t+i = 2t'+i+1$. 
\end{itemize}
It can easily be found that the LHS is limited to even numbers and the RHS is limited to odd numbers.

Note that the $\phi$ function found involves $i$ that takes different values in
each activity.  This is because the primary thread also executes an {\tt
advance} statement after spawning each thread, allowing earlier activities to
proceed. The number of activities participating in a synchronization
dynamically changes as the program executes, which is different from how
barrier synchronization typically work.

\subsection{Polynomial $\phi$ Functions}
The following is a possible parallelization of QR decomposition (via CORDIC) using clocks.
The advance statement is surrounded by two loops and an affine {\tt if} guard.
The statement domains are no longer rectangular like other examples, leading to more complicated $\phi$ functions.

\begin{lstlisting}
clocked finish
  for (j=0:N-1)
    clocked async
      for (k=0:N-2)
        for (i=0:N-2-k) 
          if (j>=k) {
            M[N-i-1][j] = 
                S0(M[N-i-1][j], M[N-i-2][j],
                   M[N-i-1][k], M[N-i-2][k]);
            M[N-i-2][j] = 
                S1(M[N-i-1][j], M[N-i-2][j], 
                   M[N-i-1][k], M[N-i-2][k]);
            advance;
          }
\end{lstlisting}

There are a total of eight Read-Write races in the above program. We do not enumerate
them as they are quite similar. Writes to $M$ by both statements are in race
with reads where $k$ is used in the access. 

The $\phi$ functions for two statements are the following:
\begin{itemize}
\item $\phi_{S0} = \phi_{S1} = Nk + i - \frac{k^2+k}{2}$
\end{itemize}

We illustrate the problem formulation with one of the eight races.
Given a writer instance $S0\langle j,k,i\rangle$, and a read access $M[N-i-1][k]$ of instance $S1\langle j',k',i'\rangle$, there
is a race when $j>k$ and $i=i$' and $k=j'$ and $k >= k'$ and $i<=N-2-k'$.

To disprove this race, we must ensure that there is no pair of distinct instances that satisfies:
\begin{itemize}
\item $i=i'$, $k=j'$, $k>=k'$, $j>k$, and
\item $Nk+i-\frac{k^2+k}{2} = Nk'+i'-\frac{k'^2+k'}{2}$.
\end{itemize}
With constraint solvers that can handle polynomials over integers, it can be verified that the above cannot be satisfied.


\section{Implementation}\label{sec:implementation}
We have implemented\footnote{Our implementation is open source, and will be made available when the
paper is published.} our analysis for the subset of X10 we handle. We take
a simplified representation of the program only concerning the access to
variables, disregarding the specifics of the operations performed in statements.
This simplified representation can easily be extracted from the full X10 AST.

Significant amount of effort has been made towards identifying polyhedral
regions of loop programs~\cite{grosser2011polly, pop2006graphite}.  
However, integrating such effort to X10 is beyond the scope of this paper.

We use the Integer Set Library~\cite{verdoolaege2010isl} for polyhedral operations and parametric integer programming,
and the Barvinok library~\cite{verdoolaege2007counting} for counting of integer points.
Other parts of the analysis are implemented in Java, and we use native bindings for library calls.

We use the Z3 SMT solver~\cite{moura2008z3} for disproving races involving polynomials.
The constraints are given to Z3 using the SMT-LIB format~\cite{barrett2010smtlib2}, 
and many other solvers that support the same standard can also be used.
We interface with Z3 through the command line.

\subsection{Z3}
We require non-linear arithmetic over integers, which is
supported only by a small subset of available constraint solvers. Z3 is a
robust and efficient SMT solver that has these features.

As our problem is undecidable in general, we have explored a number of
simplifications we may perform to help the constraint solvers. For example, when two
instances of a same statement are involved in a potential race, the $\phi$
functions are identical. The $\phi$ functions may be identical even when multiple
statements are involved, depending on the placement of {\tt advance}
statements. 

Many such properties about $\phi$ functions may be deduced by analyzing loop
structures.  Taking advantage of these can significantly simplify the problem
formulation. However, we found the Z3 implementation to be quite efficient for
programs that we tested, and hence these simplifications were not implemented.

\subsection{Clocked Benchmarks} \label{sec:benchmarks}
As we have illustrated in Section~\ref{sec:examples}, our main contribution is
in expanding the class of X10 programs that can be analyzed.  In this section,
we use a subset of benchmarks from Java Grande Forum benchmark
suite~\cite{smith2001parallel} parallelized using clocks, and the examples from
Section~\ref{sec:examples} to show the performance of our analysis. 

The benchmarks from JGF that we use are {\sc SOR}, {\sc LUFact}, and {\sc
MolDyn}.  Since the parallelization of {\sc SOR} and {\sc LUFact} in X10
benchmark repository do not use clocks, we took the parallelization using
\emph{phasers}; a synchronization construct closely related to clocks in
Habanero-Java~\cite{cave2011habanero}.  Although formal handling of HJ is not
the scope of our paper, our analysis on X10 can be carried over to its dialect,
Habanero-Java. These benchmarks in X10 and HJ were manually converted to the input
expected by our tool. 

Table~\ref{tab:JGF} summarizes our result. With the exception of {\sc MolDyn}, which has
34 statements and 9 advances, the time spent on counting advances is relatively small.
For all cases, Z3 was able to disprove the race quickly.

\begin{table}
\begin{threeparttable}[b]
\begin{tabular}{|c|c|c|c|c|c|}
\hline
{\bf Benchmark} & Races & DFA      & Counting  & SMT      & Total\\
                & Found & Time & Time & Time & Time \\
\hline
{\sc SOR}  & 4 & 3.02 & 0.54 & 0.18 & 3.57\\
{\sc MolDyn}\tnote{1}  & 16 & 5.35 & 6.55 & 0.33 & 11.95\\
{\sc LUFact}\tnote{2}  & 5 & 0.71 & 0.25 & 0.15 & 0.97\\
{\sc Jacobi}  & 4 & 0.32 & 0.23 & 0.09 & 0.56\\
{\sc Gauss-Seidel}  & 2 & 0.19 & 0.25 & 0.07 & 0.44\\
{\sc QR} & 8 & 1.35 & 0.22 & 0.29 & 1.58\\
\hline
\end{tabular}
\caption{\label{tab:JGF} Performance (in seconds) of our implementation on JGF
benchmarks and examples in Section~\ref{sec:examples}.  All races found were
disproved by the SMT solver. Our experiments were conducted with 4 core Intel
i7 (2.4 GHz) and 4GB of memory. We used Java 1.7, ISL 0.11, Barvinok 0.36, and Z3 4.3.1.}
\begin{tablenotes}
\item [1] {\sc MolDyn} uses a class to represent particles. We treat array of
particles as data arrays, and any change of the object members are considered
as writes to the array for the purpose of our analysis.
It also contains a data-dependent {\tt if} statement, which decides if a particle is updated or not.
We handle such case by conservatively assuming that the branch is always taken (the particle is always updated.)
\item [2] {\sc LUFact} in JGF does pivoting making it non-polyhedral. We have
removed the pivoting to apply our analysis.
\vspace{-0.08cm}
\end{tablenotes}
\end{threeparttable}
\end{table}

\section{Related Work}\label{sec:related}
In this section, we place our work in context of the previous work on analysis
of barrier synchronization.

\subsection{Analysis of Clocks}
There are very little work in the literature that deals with clocks in X10 (or
similar parallel constructs).  Vasudevan et al.~\cite{vasudevan2009compile}
have presented static analysis for verifying \emph{syntactic} correctness of
clocks to avoid runtime exceptions.  They also propose a few optimizations for
certain patterns of clock usage, e.g., when some activities are registered but
do not participate in synchronization.

We distinguish our work by providing race free guarantee to the polyhedral
subset of X10 programs.  We are not aware of any other work that verify absence
of races in the presence of clocks.

\subsection{Analysis of Barriers}
Barrier synchronization in conventional parallel programming models (e.g., MPI,
OpenMP) has similar semantics as X10 clocks. There are tools for dynamic
analysis of these languages~\cite{vetter2000umpire, krammer2004marmot,
park2011efficient}. 
Our work complements these tools by providing race-free guarantee for program
regions amenable to our static analysis.

There are static analysis techniques developed for analyzing barrier
synchronization of SPMD-style programs~\cite{aiken1998barrier,
kamil2005concurrency, zhang2007barrier}. SPMD (Single Program Multiple Data) is
a common parallel programming model where the same code is executed on multiple
processes. Typically, each process work on different data by referring to its
process ID and synchronize/communicate accordingly.

Aiken and Gay~\cite{aiken1998barrier} introduced
\emph{single-valued} expressions, expressions that can be proved to evaluate to
the same value in all processes, to ensure that all processes execute the same
number of barriers.  Kamil and Yelick~\cite{kamil2005concurrency} extend the
work of Aiken and Gay in the context of Titanium parallel programming
language~\cite{yelick1998titanium} to perform May-Happen-in-Parallel analysis.

Titanium requires all the barriers to be \emph{textually aligned}, i.e., all
processes must execute the same barrier in the same order.  Zhang and
Duesterwald~\cite{zhang2007barrier} present a method for more general
SPMD-style programs where barriers are not necessarily textually aligned.

X10 is not SPMD; activities are dynamically created executing its own piece of
code.  Furthermore, the participants of barriers can dynamically change,
complicating static analysis.

\subsection{Handling of {\tt at} and {\tt places}}
X10 uses Partitioned Global Address Space (PGAS) programming model, where 
the address space is spearated into multiple \emph{places}. The {\tt at}
construct allows the programmer to specify the place where operations
are  performed.

Agarwal et al. presented an algorithm to find the set of iterations that may
happen in parallel for X10 programs~\cite{agarwal2007may}.  Although they do
not handle {\tt clocks}, their algorithm handles {\tt at} and \emph{places}.
They assume that places are identified as some function of the loop indices.
Two statement instances execute at the same place only if the expressions
evaluate to the same value.

They also handle {\tt atomic} blocks in X10, which are similar to {\tt atomic}
blocks in other languages, but only allow concurrent execution of critical
sections if two processes are in different places.

Both of these may be used in combination with our work to further extend the
applicability of our analysis.

\section{Conclusions and Future Work}\label{sec:conclusion}
We have presented a method for guaranteeing race freedom of polyhedral X10
programs with clocks. We show that the problem is undecidable and resort to
constraint solvers for providing the guarantee. The idea is not limited to X10
programs, and can easily be adapted to handle its dialect, Habanero-Java, and
possibly other languages with less dynamic synchronizations.  When combined
with the work by Agarwal et al.~\cite{agarwal2007may}, we may now analyze all
of the basic parallel constructs in X10. 

There has been little work on static analysis of parallel programming
languages.  The application of the main ideas in our work is not limited to
data race detection.  The formalization and happens-before relation opens many
opportunities such as scheduling, memory allocation, program transformations,
and so on.

Another direction of future work is to relax the polyhedral requirement.
This is a limitation shared among any polyhedral analysis, and our work will also
benefit from progress in this direction.


\bibliographystyle{plain}
{\footnotesize
\bibliography{checkX10-2}}
\end{document}